\DeclareMathOperator{\s}{S}
\newcommand{\R}{\mathcal{R}}
\title{Squarability of rectangle arrangements}
\author{Mat\v{e}j Kone\v{c}n\'{y}\thanks{Faculty of Mathematics and Physics,
Charles University in Prague, Czech Republic. E-mails: \texttt{matejkon@gmail.com, stanislav.kucera@outlook.com, opler@iuuk.mff.cuni.cz, j.sosnovec@email.cz, simsa.st@gmail.com, mtopfer@gmail.com}} \and Stanislav Ku\v{c}era\footnotemark[1] \and  Michal Opler\footnotemark[1] \and Jakub Sosnovec\footnotemark[1] \and \v{S}t\v{e}p\'{a}n \v{S}imsa\footnotemark[1] \and Martin T\"{o}pfer\footnotemark[1]}
\begin{document}
\thispagestyle{empty}
\maketitle

\begin{abstract}
We study when an arrangement of axis-aligned rectangles can be transformed into an arrangement of axis-aligned squares in $\mathbb{R}^2$ while preserving its structure. We found a counterexample to the conjecture of J. Klawitter, M. N\"{o}llenburg and T. Ueckerdt whether all arrangements without crossing and side-piercing can be squared. Our counterexample also works in a more general case when we only need to preserve the intersection graph and we forbid side-piercing between squares. We also show counterexamples for transforming box arrangements into combinatorially equivalent hypercube arrangements. Finally, we introduce a linear program deciding whether an arrangement of rectangles can be squared in a more restrictive version where the order of all sides is preserved.
\end{abstract}

\section{Introduction}

In this paper, we are concerned with the following problem. Given an arrangement of axis-aligned rectangles in $\mathbb{R}^2$, is it possible to find an arrangement of axis-aligned squares with corresponding properties? J. Klawitter, M. N\"{o}llenburg and T. Ueckerdt \cite{main} asked which geometric rectangle arrangements can be transformed into combinatorially equivalent square arrangements. While showing some necessary and sufficient conditions for that, the question whether there exists an unsquarable rectangle arrangement without crossings and side-piercings (see Figure \ref{intersection_types}) remained open. We show a counterexample for that -- an arrangement of rectangles which is not combinatorially equivalent to any square arrangement. Moreover, our counterexample works even in a more general case when we only need to preserve the intersection graph of arrangements and we forbid side-piercing between squares.

In Section \ref{higherDimensions} we generalize the problem to higher dimensions -- considering hypercubes instead of squares and boxes instead of rectangles. We show that allowing crossings or side-piercings in any dimension leads to arrangements of boxes for which no corresponding arrangement of hypercubes exists.

Besides constructing counterexamples we also present an algorithm for deciding whether a given arrangement is squarable when the order of all sides has to be preserved (which implies combinatorial equivalence).

\subsection{Preliminaries}
Let $\R$ denote a given set of axis-aligned rectangles in $\mathbb{R}^2$ and $\s$ be a mapping from $\R$ to axis-aligned squares in $\mathbb{R}^2$ satisfying certain restrictions. If such $\s$ exists, we say that $\R$ is \emph{squarable} and $\s$ is a \emph{squaring} of $\R$. Thus $\s(\R)$ is a set of squares obtained from $\R$ in a way specific to the particular variant and $\s(R)$ is the square representing the rectangle $R\in\R$. In each variant we explain the restrictions placed on the input set of rectangles $\R$ and on the output set of squares $\s(\R)$.

\begin{figure}[h!]
\centering
\includegraphics[width=0.8\linewidth]{./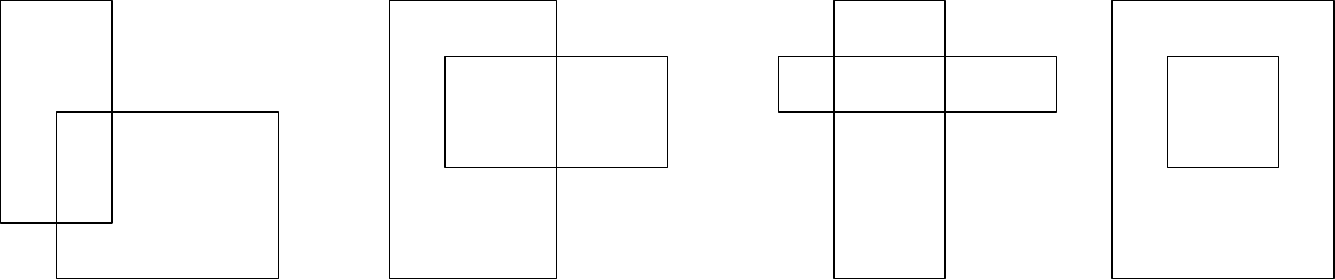}
\caption{Intersection types. Respectively: corner intersection, side-piercing, cross intersection and containment.}
\label{intersection_types}
\end{figure}

There are four intersection types: corner intersection, side-piercing, cross intersection and containment (see Figure \ref{intersection_types}). Note that we do not include empty intersection (formed by disjoint rectangles) as an intersection type. Also, we only consider sets of rectangles where no two rectangle sides are collinear.

In all the discussed variants, we assume that the input set $\R$ contains no two rectangles with side-piercing or cross intersection. Allowing these intersection types easily leads to instances of arrangements of rectangles that cannot be squared -- any two rectangles with the cross intersection clearly cannot be squared as well as the arrangement of four rectangles in Figure~\ref{piercing} for side-piercing.

\begin{figure}[h!]
\centering
\includegraphics[width=0.4\linewidth]{./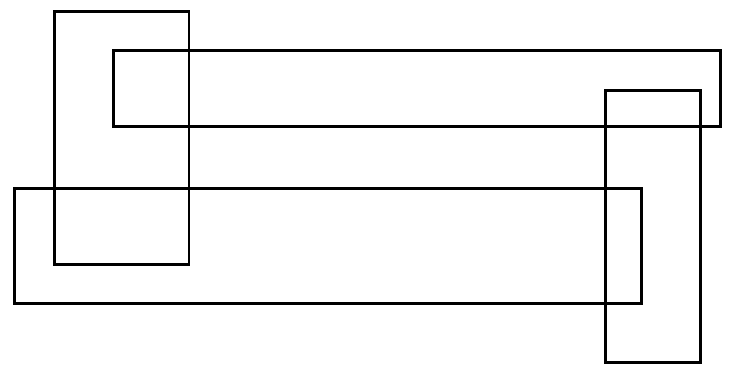}
\caption{An arrangement that cannot be squared due to side-piercing intersections.}
\label{piercing}
\end{figure}

Without loss of generality, we assume all the rectangles have positive coordinates. If it is not the case we just translate the whole arrangement. For a rectangle $R$ we denote:
\begin{itemize}
    \item $t(R)$ to be the $y$-coordinate of the top side of $R$,
    \item $b(R)$ to be the $y$-coordinate of the bottom side of $R$,
    \item $r(R)$ to be the $x$-coordinate of the right side of $R$,
    \item $l(R)$ to be the $x$-coordinate of the left side of $R$,
    \item $h(R)$ to be the height of $R$: $h(R)=t(R)-b(R)$,
    \item $w(R)$ to be the width of $R$: $w(R)=r(R)-l(R)$.
\end{itemize}

\subsection{Variants of the squarability problem}
Let $\R$ be an arrangement of rectangles and $\s$ be a squaring of $\R$. We say that $\R$ and $\s(\R)$ are \emph{combinatorially equivalent} if for any $R_1,R_2\in\R$, the intersection type of $\s(R_1)$ and $\s(R_2)$ is the same as the intersection type of $R_1$ and $R_2$ and these intersections happen exactly on the same sides (and corners). For example, if $R_1$ and $R_2$ have corner intersection that is in the upper left corner on $R_1$ and the lower right corner of $R_2$, the same must hold for $\s(R_1)$ and $\s(R_2)$.

Note that the above definition of combinatorial equivalence is strictly weaker than the one given in \cite{main}. This definition is, however, convenient to us as the basic requirement. Since our counterexample works in this less restrictive case, it is also a counterexample when the referenced definition is used.

The following are variants of the squarability problem. They vary in the strength of the assumptions we put on the mapping $\s$.
\begin{description}
\item[Preserve order of all sides.]
The output $\s(\R)$ has to be combinatorially equivalent to $\R$ and, moreover, the respective order of sides on both axes has to be preserved. On a chosen axis, we can construct the sequence of sides of rectangles $\R$ from left to right as they appear, i.e., every rectangle will appear exactly twice. Then the same sequence of sides has to be realized in $\s(\R)$.
\item[Combinatorial equivalence.]
The output $\s(\R)$ has to be combinatorially equivalent.
\item[Keep intersections, forbid side-piercing.] First, we require that the intersection graphs of $\R$ and $\s(\R)$ are isomorphic, i.e., it holds that $R_1\cap R_2\neq\emptyset$ if and only if $\s(R_1)\cap\s(R_2)\neq\emptyset$ for all $R_1,R_2\in\R$. Additionally, the squares in the output set $\s(\R)$ must only have corner intersections or containment.

\item[Keep intersection graph.] We only require that the intersection graphs of $\R$ and $\s(\R)$ are isomorphic. 
\end{description}

Note that if $\s$ satisfies ``Preserve order of all sides'', then it satisfies ``Combinatorial equivalence''. In the same sense, ``Combinatorial equivalence'' implies ``Keep intersection, forbid side-piercing'' (by the assumption that $\R$ contains no side-piercing), which implies ``Keep intersection graph''.


\section{Counterexamples}
In this section we will discuss examples of arrangements of rectangles, which cannot be squared in terms of the mapping $\s$. In each subsection we consider squarability with respect to of one of the variants. We will start with the most restrictive case and proceed to more general variants.
\subsection{Preserving order of all sides}
If we want the resulting arrangement of squares to preserve the order of all sides, there is an easy example of four rectangles that cannot be squared.\\
\begin{figure}[h!]
\centering
\includegraphics[width=25mm]{./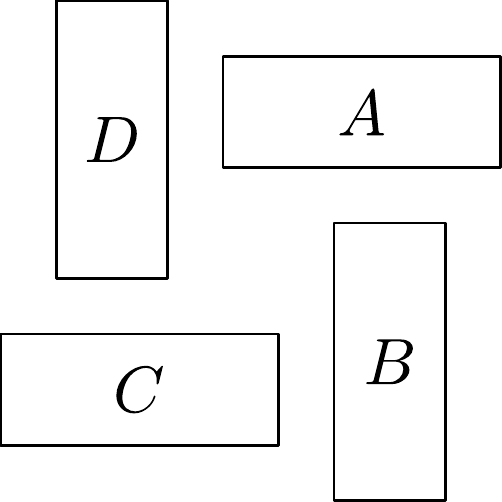}
\caption{An arrangement not squarable in the most restrictive case.}
\label{fig:firstcounter}
\end{figure}

\begin{theorem}\label{4rect}
The arrangement of rectangles in Figure \ref{fig:firstcounter} cannot be squared while preserving order of all sides.
\end{theorem}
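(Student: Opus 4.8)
The plan is to assume for contradiction that a squaring $\s$ preserving the order of all sides exists, and to extract from the two prescribed side-orders a cyclic chain of strict size inequalities that squares cannot satisfy. The only way the squaring hypothesis enters is through the fact that each $\s(R)$ is a square, so $w(\s(R)) = h(\s(R))$ for every $R\in\R$; the whole difficulty is to show that this single family of equalities is incompatible with the orders of the sides read off from Figure \ref{fig:firstcounter}. I would first record those two total orders: on the $x$-axis the left and right sides of the four rectangles appear in a fixed sequence that $\s(\R)$ must reproduce, and likewise for the bottom and top sides on the $y$-axis.

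The key elementary tool is monotonicity of extent. If the $x$-order places both sides of one square strictly between the sides of another, say $l(\s(R_a)) < l(\s(R_b)) < r(\s(R_b)) < r(\s(R_a))$, then subtracting the two inequalities gives $w(\s(R_b)) < w(\s(R_a))$, and symmetrically a nesting of vertical extents forces a strict height inequality. I would label the rectangles $R_1,\dots,R_4$ so that the preserved $x$-order exhibits two such nestings, forcing $w(\s(R_1)) < w(\s(R_2))$ and $w(\s(R_3)) < w(\s(R_4))$, while the preserved $y$-order exhibits two nestings forcing $h(\s(R_2)) < h(\s(R_3))$ and $h(\s(R_4)) < h(\s(R_1))$.

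Finally I would splice these four inequalities into a single cycle by using the square condition to trade each width for its equal height. Abbreviating $w_i = w(\s(R_i))$ and $h_i = h(\s(R_i))$, so that $w_i = h_i$, this yields
\[
 w_1 < w_2 = h_2 < h_3 = w_3 < w_4 = h_4 < h_1 = w_1,
\]
hence $w_1 < w_1$, a contradiction, completing the proof. The main obstacle — and the reason this particular arrangement is chosen — lies in the preceding step rather than in the final chain: one must check that the four nesting inequalities can be read off \emph{simultaneously} from the two side-orders and that they are oriented so the substitutions $w_i = h_i$ close them into a cycle alternating between the axes. I would verify this directly from the coordinates of Figure \ref{fig:firstcounter}, also confirming that the arrangement contains no forbidden cross or side-piercing intersection so that it is an admissible input.
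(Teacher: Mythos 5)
Your proposal is correct and matches the paper's own argument: the paper likewise reads off the nesting of sides from the preserved orders and derives the cyclic chain $w(A)>w(B)=h(B)>h(C)=w(C)>w(D)=h(D)>h(A)=w(A)$, a contradiction. Your version merely spells out the intermediate ``monotonicity of extent'' step that the paper leaves implicit.
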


\begin{proof}
After squaring the arrangement we would get $w(A)>w(B)=h(B)>h(C)=w(C)>w(D)=h(D)>h(A)=w(A)$; thus, the arrangement is unsquarable.
\end{proof}

This is an easy observation but it is important, because this arrangement is exactly the one we will find in latter cases to prove unsquarability of other arrangements.
\subsection{Combinatorial equivalence}
In the second most restrictive definition of the mapping $\s$ we want the resulting arrangement of squares to not only have the same types of intersections but also to have the same position. This means that if there is a rectangle A and a rectangle B intersecting A in the top right corner then $\s(B)$ will intersect $\s(A)$ again in the top right corner.

\begin{figure}[h!]
\centering\includegraphics[width=0.45\textwidth]{./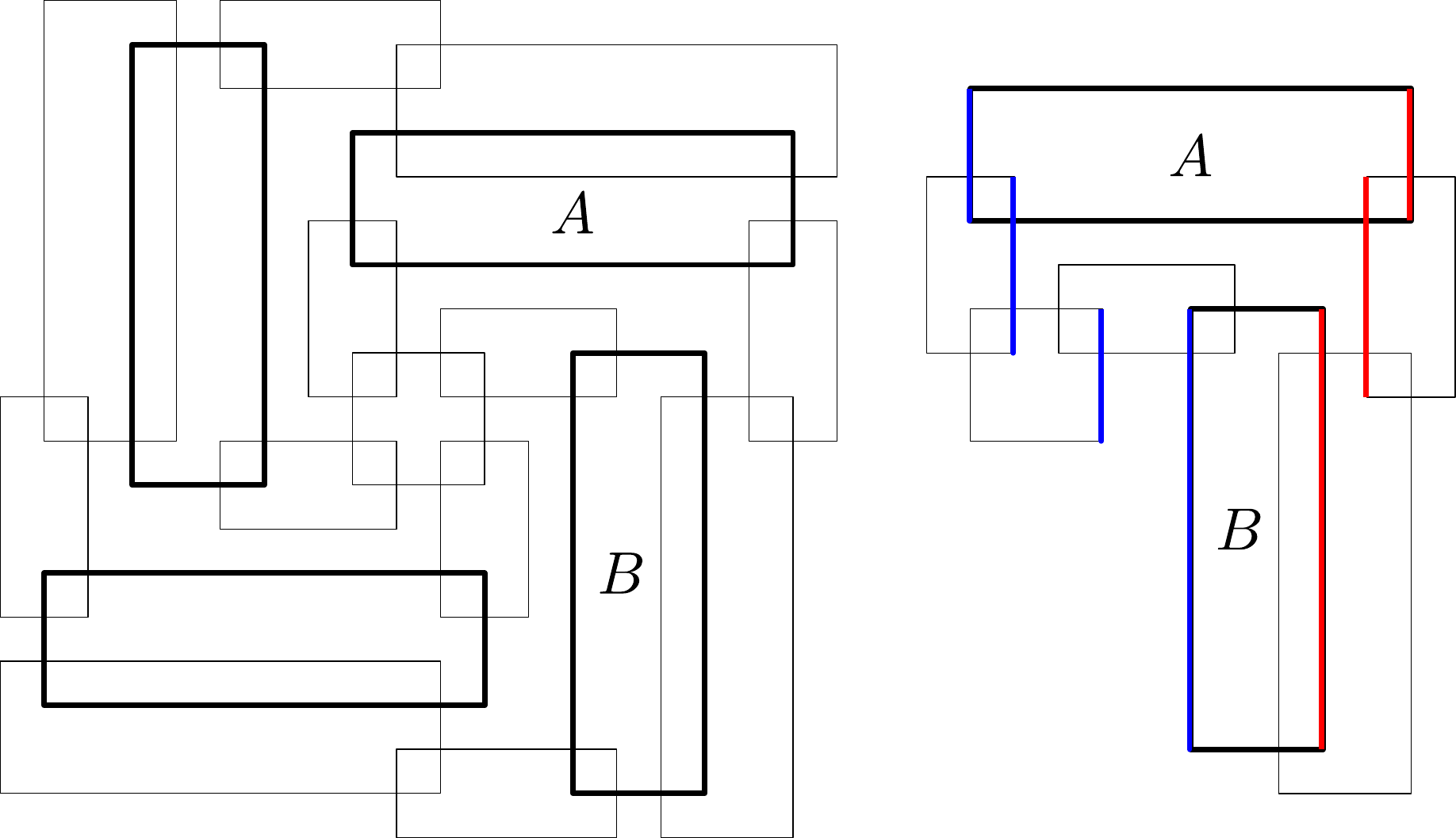}
\caption{An arrangement not squarable when $\s$ keeps the combinatorial equivalence.}
\label{fig:secondcounter}
\end{figure}

\begin{theorem}
The arrangement of rectangles in the left picture of Figure \ref{fig:secondcounter} cannot be squared.
\end{theorem}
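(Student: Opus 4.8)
The plan is to assume, for contradiction, that a squaring $\s$ preserving combinatorial equivalence exists, and to recover the cyclic chain of inequalities from the proof of Theorem~\ref{4rect}. Recall that there four rectangles $A,B,C,D$ forced the impossible cycle $w(A)>w(B)=h(B)>h(C)=w(C)>w(D)=h(D)>h(A)=w(A)$, in which each equality is the defining property of a square and each strict inequality was read off from the prescribed order of sides. Here the order of sides is \emph{not} given to us directly, so the whole difficulty is to show that combinatorial equivalence alone still forces these four strict comparisons.

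First I would identify inside the arrangement of Figure~\ref{fig:secondcounter} a distinguished four-tuple of rectangles playing the roles of $A,B,C,D$, together with the remaining rectangles acting as \emph{helpers}. The point of the helpers is exactly that combinatorial equivalence fixes only the type of each intersection and the corners at which it occurs, but \emph{not} the global left-to-right (or bottom-to-top) order of all sides; this is why four rectangles no longer suffice and why the helpers must be added to pin down the missing order relations. Concretely, for each consecutive pair in the cycle I would translate every relevant corner intersection into the coordinate inequalities it enforces — for a corner intersection at, say, the top-right corner of one rectangle and the bottom-left corner of the other, a chain $l(\cdot)<l(\cdot)<r(\cdot)<r(\cdot)$ on the $x$-axis and the analogous chain on the $y$-axis — and then combine the inequalities coming from the helpers to deduce a genuine comparison of widths or heights, for example $w(A)>w(B)$.

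Carrying this out for all four consecutive pairs should yield $w(A)>w(B)$, $h(B)>h(C)$, $w(C)>w(D)$ and $h(D)>h(A)$. Since $\s(A),\s(B),\s(C),\s(D)$ are squares we have $w(X)=h(X)$ for each of them, and substituting these equalities into the four inequalities reconstructs exactly the cycle of Theorem~\ref{4rect}, namely $w(A)>w(B)=h(B)>h(C)=w(C)>w(D)=h(D)>h(A)=w(A)$, which gives $w(A)>w(A)$, a contradiction. Hence no squaring of Figure~\ref{fig:secondcounter} preserving combinatorial equivalence can exist.

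The main obstacle I expect is the middle step: showing that each of the four strict inequalities is actually \emph{forced}. A single corner intersection between two rectangles does not by itself compare their widths or heights, since it only orders their corners; so the argument must use the helper rectangles to close the gap, which amounts to a geometric case analysis ruling out every side order consistent with the prescribed intersection types except the one producing the bad cycle. Verifying that the helpers in the figure are sufficient for this, and that no alternative combinatorially equivalent placement escapes the cycle, is where the real work lies.
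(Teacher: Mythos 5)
Your overall strategy is the paper's: single out four distinguished rectangles $A,B,C,D$ that realize the pattern of Theorem~\ref{4rect}, use the remaining rectangles as helpers to force the four strict comparisons, and close the contradictory cycle $w(A)>w(B)=h(B)>h(C)=w(C)>w(D)=h(D)>h(A)=w(A)$. The paper additionally exploits the symmetry of the arrangement to prove only one of the four inequalities, say $w(A)>w(B)$, via $l(A)<l(B)$ and $r(A)>r(B)$.

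However, there is a genuine gap exactly where you locate ``the real work,'' and the concrete mechanism you propose would not close it. You plan to translate every relevant \emph{corner intersection} into coordinate inequalities and combine those. But inequalities harvested from intersections alone are always of the overlapping kind --- for intersecting $F,G$ one only gets relations such as $l(F)<r(G)$, i.e.\ orderings of the sides of two rectangles that meet --- and such relations can never produce a separation like $r(A)>r(B)$, since nothing in them prevents all the squares from being squeezed together. The inequality that actually drives the paper's proof comes from the preserved \emph{non}-intersections: if two rectangles $C$ and $D$ are disjoint but both intersect a common rectangle $E$ on its top (or bottom) side, $C$ at the left corner and $D$ at the right corner, then combinatorial equivalence forces $r(C)<l(D)$. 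Chaining this disjointness-derived separation with the trivial $l(F)<r(G)$ for intersecting pairs along designated helper rectangles (the red and blue sides in the figure) yields $r(A)>r(B)$ and $l(A)<l(B)$, hence $w(A)>w(B)$. Your proposal never invokes the preserved non-intersections, so as written the ``geometric case analysis'' has no source of strict separations and cannot derive the cycle; adding this one observation is what turns your outline into the paper's proof.
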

\begin{proof}
To prove that, we want to show that the four bold rectangles form the pattern from Theorem \ref{4rect}. To do that, we need to prove that there is that cyclic condition on lengths of their sides. It suffices to show the dependency only for one pair of neighbouring rectangles since the arrangement is symmetric.

In the right picture of Figure \ref{fig:secondcounter}, there is the situation for $A$ and $B$ where only the important rectangles are drawn. Suppose the rectangles are orientated as in the picture (orientation is fixed for the whole arrangement). To prove $w(A)>w(B)$ in all possible mappings $\s$, it is sufficient to show $l(A)<l(B)$ and $r(A)>r(B)$.

We observe that when two rectangles $C$ and $D$ intersect a~common rectangle $E$ on its top (or bottom) side, $C$ being the one intersecting it in the left corner, and $C,D$ do not intersect each other it must hold $r(C)<l(D)$. When two rectangles $F,G$ intersect each other then $l(F)<r(G)$. These two observations used on the red sides of the rectangles in Figure \ref{fig:secondcounter} together give us $r(A)>r(B)$. To prove $l(A)<l(B)$ we use the observations for the blue sides.
\end{proof}

\subsection{Keep intersections, forbid side-piercing}
So far we have been mainly building tools and considering easy examples. For $\s$ which only keeps intersections without allowing side-piercing in $\s(\R)$ we still need one more tool.

\begin{figure}[h!]
\centering\includegraphics[width=0.9\linewidth]{./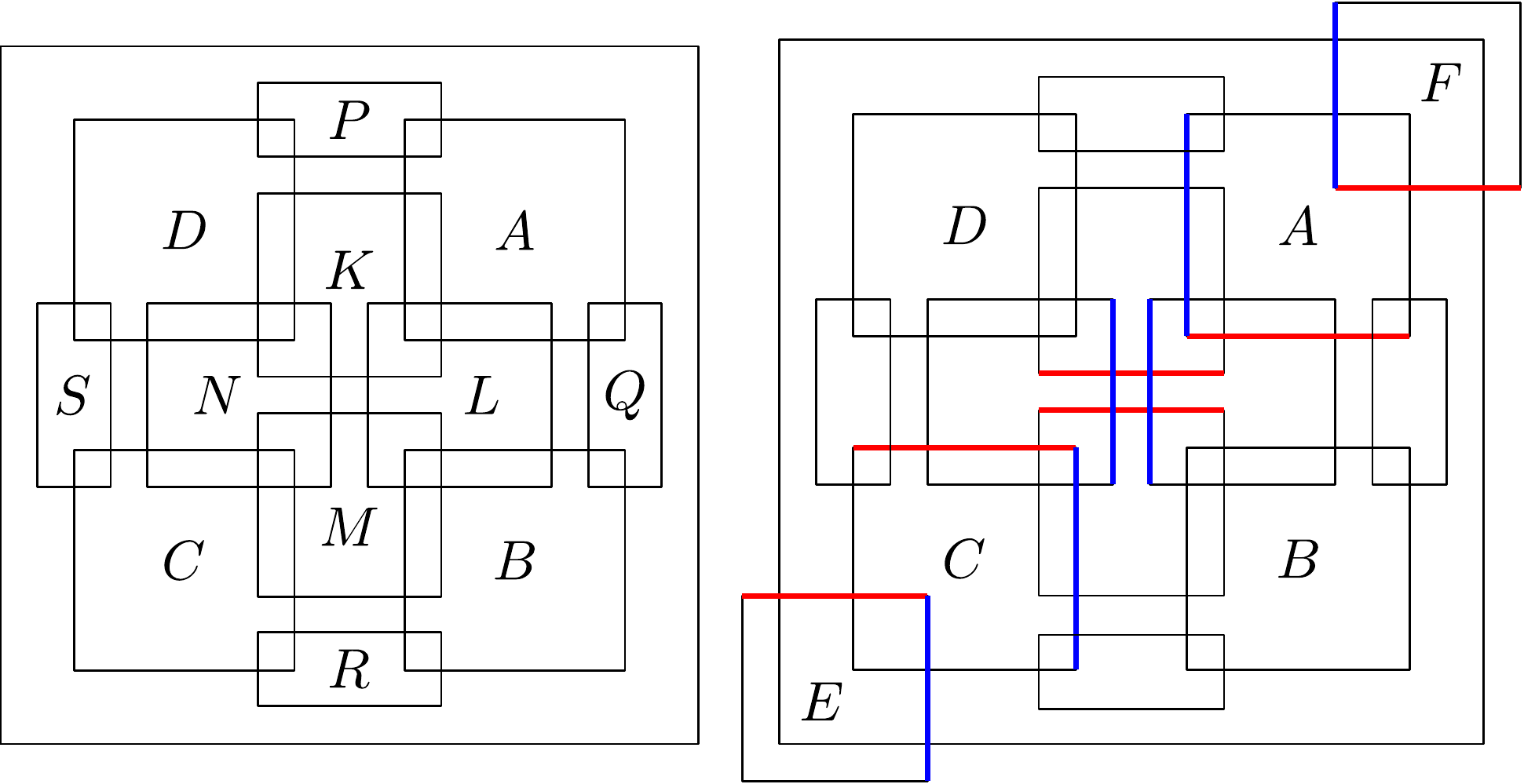}
\caption{$\Sigma$-gadget and its usage.}
\label{fig:structure}
\end{figure}

We refer to the arrangement depicted in the left picture of Figure \ref{fig:structure} as a \emph{$\Sigma$-gadget}. It is an arrangement of rectangles that can be squared even in the most restrictive case and we use it to force some useful properties.

\begin{lemma}
All squarings of the $\Sigma$-gadget that keep intersections but forbid side-piercings are combinatorially equivalent, up to rotation and reflection.
\end{lemma}
\begin{proof}
First look at the rectangles $K, L, M, N$ in the middle. There is only one way to square them upon a rotation and reflection. Then we want to square rectangles $A,B,C$ and $D$. Notice that $A$ can be contained neither in $K$ nor in $L$ because it intersects $P$. This, and the fact the side-piercing is forbidden, gives us three possibilities how to place $A$, relatively to $K$ and $L$. It can be either in the position as in the Figure \ref{fig:structure} or in such position that it contains the intersection of the squares $K$ and $L$ or in the opposite corner than in the first case. In Figure~\ref{fig:3options} we see all the important cases.
\begin{figure}[h!]
\centering
\includegraphics[width=0.8\linewidth]{./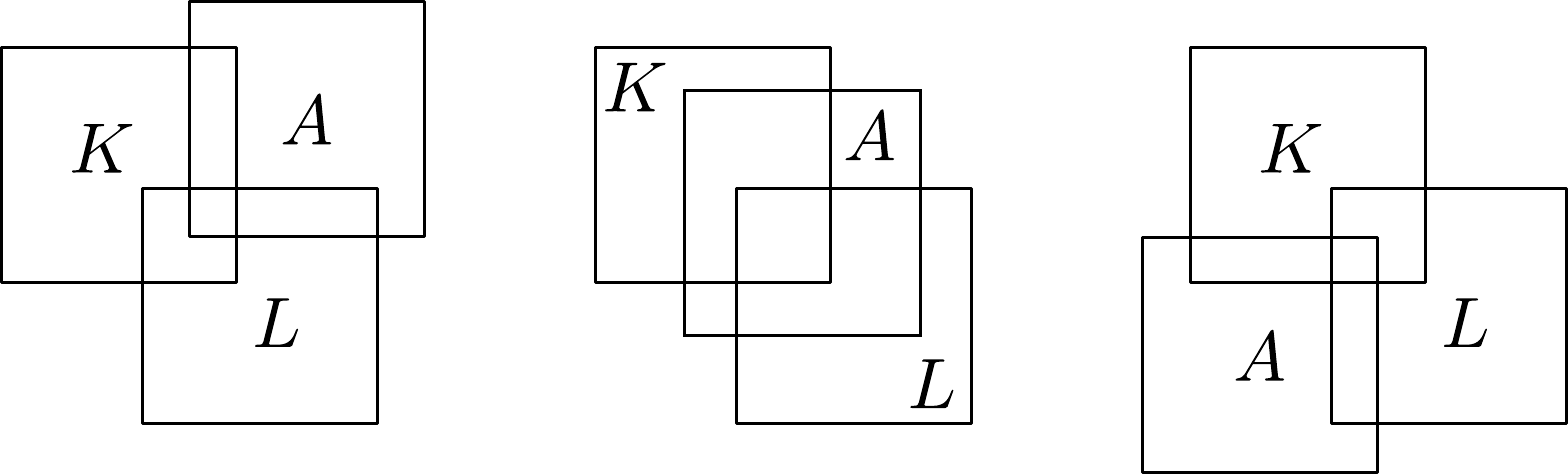}
\caption{Three possible ways of placing rectangle $A$.}
\label{fig:3options}
\end{figure}
The first case is the one we want. In the second case, the position of $A$ forces $P$ (and $Q$) to intersect the bottom left corner of $A$ because $P$ ($Q$) needs to intersect $D$ ($B$) without intersecting $K$ ($L$). This means $P$ and $Q$ both intersect the bottom left corner of $A$ and so they intersect each other, a contradiction. In the last case, $A$ would intersect $M$ and $N$, a contradiction. Therefore, there is only one way to square $A$ and by symmetry the same is true for $B, C$ and $D$. Now the rectangles $P, Q, R$ and $S$ can also be squared in only one way, completing the proof.
\end{proof}

First we explain how we use the $\Sigma$-gadget in an arrangement. If we want another rectangle (or another $\Sigma$-gadget) to intersect our $\Sigma$-gadget in a corner, it must intersect both the surrounding rectangle and one of $A,B,C$ or $D$ depending on in which corner it intersects $\Sigma$-gadget. Besides these two it does not intersect anything else.

Now that we know that the $\Sigma$-gadget can be squared in exactly one way and how to use it in an arrangement, let us explore some of its useful properties. As is illustrated in the right picture of Figure \ref{fig:structure}, the most useful property comes to play when the $\Sigma$-gadget is intersected by rectangles in opposite corners, lets call them $E$ and $F$. Usually this only gives us one of the following conditions:
\begin{itemize}
    \item $r(E)<l(F)$ (blue colored sides in the picture).
    \item $t(E)<b(F)$ (red colored sides in the picture).
\end{itemize}
The $\Sigma$-gadget provides both conditions at the same time, which is very useful when forcing the situation like in Theorem \ref{4rect}. At the same time, if the $\Sigma$-gadget is intersected in two corners, we can always say whether the corners are \emph{opposite} or \emph{adjacent}. For the purposes of arrangements in which the $\Sigma$-gadget is used, when we talk about the height, width, left side and so on, we always mean the height, width, left side, ... of the outer rectangle.

\begin{figure}[h!]
\centering
\includegraphics[width=0.9\linewidth]{./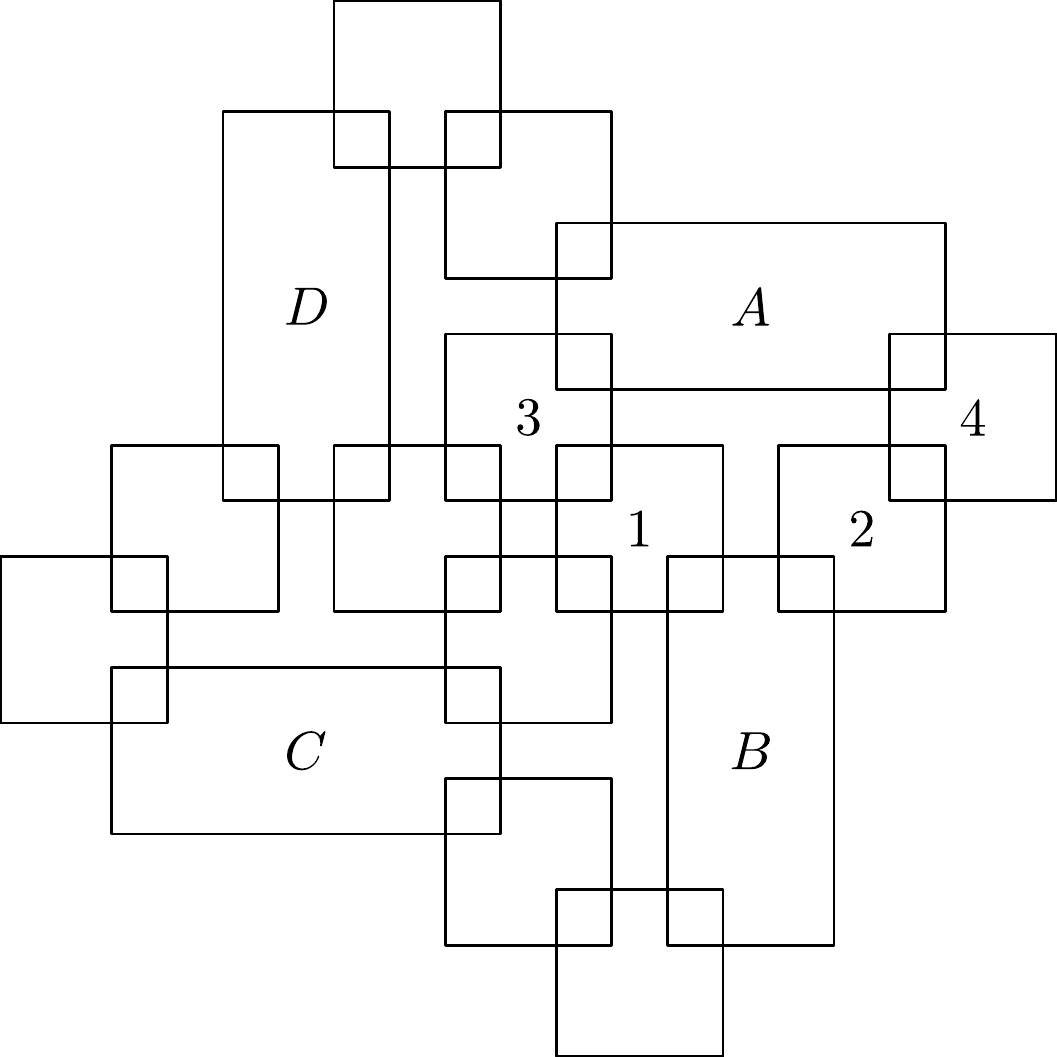}
\caption{An arrangement using $\Sigma$-gadget not squarable even in the least restrictive case without side-piercing.}
\label{fig:finalcounter}
\end{figure}
Having such a strong tool it is now easy to create an arrangement of rectangles that cannot be squared.

\begin{theorem}
The arrangement from Figure \ref{fig:finalcounter} with
the $\Sigma$-gadget instead of each rectangle cannot be squared.
\end{theorem}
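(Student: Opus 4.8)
The plan is to lift the four-rectangle obstruction of Theorem~\ref{4rect} into the weaker ``keep intersections, forbid side-piercing'' setting by exploiting the rigidity of the $\Sigma$-gadget. The arrangement of Figure~\ref{fig:finalcounter} places one $\Sigma$-gadget at each of the four positions $A,B,C,D$ occupied by the rectangles of Theorem~\ref{4rect}, wiring consecutive gadgets together by corner intersections in exactly the cyclic pattern of the original four rectangles. I would show that any admissible squaring is forced to obey the same cyclic chain of strict inequalities on the side lengths of the four outer rectangles, which is impossible.

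First I would invoke the preceding lemma to pin down the interior of each gadget: every squaring that keeps intersections and forbids side-piercings renders each $\Sigma$-gadget combinatorially equivalent to its original form, up to rotation and reflection. The immediate payoff is twofold. Each outer rectangle is mapped to a genuine square, so its height equals its width; and each gadget carries a well-defined distinction between \emph{opposite} and \emph{adjacent} corners, which is the structure the gadget's key property relies on. The next step, which I expect to be the main obstacle, is to verify that in the global arrangement each gadget is entered by its two neighbours through \emph{opposite} corners rather than adjacent ones. The lemma only fixes each gadget up to rotation and reflection, so I must argue that the way neighbouring gadgets are attached is rigid enough to rule out orientations in which the two incoming intersections would land on adjacent corners; this is precisely where the local symmetry freedom has to be tamed by the global connectivity.

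Once the opposite-corner configuration is secured, the distinguishing property of the $\Sigma$-gadget applies: when a gadget is pierced in opposite corners by two rectangles $E$ and $F$, any squaring simultaneously satisfies $r(E)<l(F)$ \emph{and} $t(E)<b(F)$. It is exactly this \emph{double} condition, unavailable from an ordinary corner intersection in this weak setting, that silently reinstates the side-order constraints used in Theorem~\ref{4rect}. I would then chain these inequalities around the cycle: each gadget, through its two opposite-corner intersections, contributes a strict comparison between the relevant sides of consecutive outer rectangles, and combining these with the square equalities $h=w$ forced by the lemma reproduces a cyclic chain of the form $w>w'>w''>w'''>w$. This contradiction shows that the arrangement of Figure~\ref{fig:finalcounter} admits no squaring.
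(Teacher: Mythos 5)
Your proposal matches the paper's argument: both pin down each $\Sigma$-gadget via the uniqueness lemma, use the opposite-corner double condition $r(E)<l(F)$ and $t(E)<b(F)$ to force strict inequalities between the side lengths of consecutive outer squares, and chain these around the cycle into the cyclic contradiction of Theorem~\ref{4rect}. The one point worth noting is that the step you flag as the main obstacle---showing the two neighbours enter each gadget at opposite corners---needs no global rigidity argument: the lemma already gives combinatorial equivalence of each gadget up to rotation and reflection, which preserves the opposite/adjacent distinction, so this check is purely local (the paper also routes the comparison through intermediate gadgets labelled 1--4 rather than direct gadget-to-gadget contacts, but the mechanism is the same).
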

\begin{proof}
We show that rectangles $A$, $B$, $C$ and $D$ form the same arrangement as we saw in Theorem \ref{4rect}. Rectangles 1 and 2 lie on the same side of $B$. Rectangles 3 and 4 lie in the opposite corners of 1 and 2 respectively with respect to $B$. Because rectangles 1 and 2 are $\Sigma$-gadgets, this implies $l(B)>r(3)$ and $r(3)>l(A)$ since rectangles $A$ and 3 intersect each other. We showed $l(B)>l(A)$ and similarly using rectangles 2 and 4 we can show $r(B)<r(A)$. Together this gives us $w(B)<w(A)$. Rotating the argument around the arrangement we show that if the arrangement gets squared it holds $w(B)<w(A)=h(A)<h(D)=w(D)<w(C)=h(C)<h(B)=w(B)$, which cannot be true.
\end{proof}

One could think this cannot be all. After all in previous cases we needed to show there is only one way to draw the arrangement of squares and we always ended up with a square which we couldn't add. Note that we did just that by showing the $\Sigma$-gadget can be squared in only one way.

\section{Higher dimensions}
\label{higherDimensions}
In this section we will make some observations about arrangements of boxes in higher dimensions. We use the same notation as before, that is $\R$ denotes a set of axis-aligned boxes in $\mathbb{R}^d$ and $S$ its mapping to a set of axis-aligned hypercubes in $\mathbb{R}^d$. We will often work with projections of $\mathbb{R}^d$ to a subset of coordinates. For a set $I \subseteq \{1, \ldots, d\}$ let $\mu_I: \mathbb{R}^d \to \mathbb{R}^{|I|}$ be a projection that ``forgets'' all coordinates not indexed by $I$. Furthermore for a singleton-indexed projection we shorten its notation $\mu_{\{c\}} = \mu_c$. The result of a projection $\mu_I$ applied to $\R$ is an arrangement of axis-aligned boxes or hypercubes in $\mathbb{R}^{|I|}$.

\begin{figure}[h!]
\centering
\includegraphics[width=0.9\linewidth]{./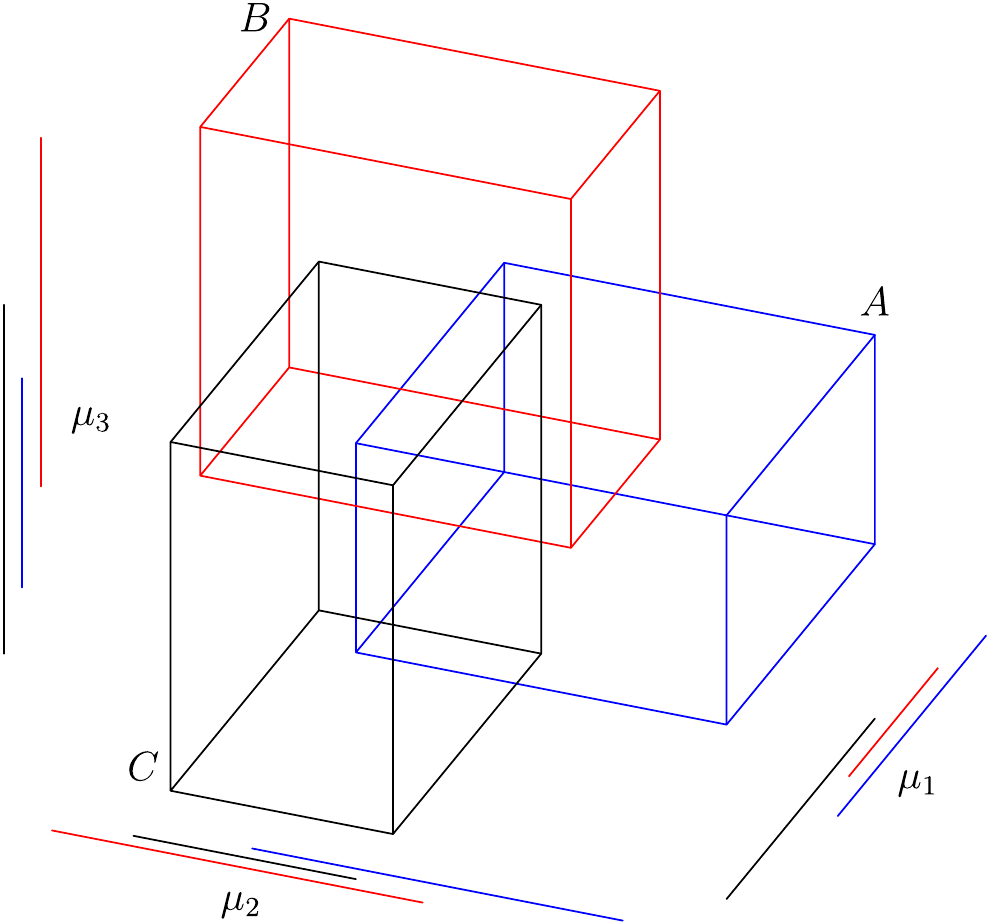}
\caption{An arrangement of three boxes in $\mathbb{R}^3$ such that there is no combinatorially equivalent arrangement of cubes.}
\label{fig:fig3d}
\end{figure}

The notion of combinatorial equivalence extends naturally to higher dimensions. We can observe that with each extra dimension we get new intersection types. For example, consider the following arrangement of only three boxes in $\mathbb{R}^3$ from Figure \ref{fig:fig3d}. Each pair of these boxes intersects in such a way that one pierces the edge of the other.  We claim that there cannot be a combinatorially equivalent arrangement of hypercubes.
Assume that we have such an arrangement of hypercubes $A'$, $B'$ and $C'$. Then the projection $\mu_1$ forces $A'$ to be bigger than $B'$, similarly $\mu_2$ for $B'$ and $C'$. Finally, $\mu_3$ forces $C'$ to be bigger than $A'$ and that is a contradiction.

\subsection{Boxicity and cubicity}
In the beginning we restricted to arrangements without side-piercings and cross intersections. It is fairly easy to see how they lead to counterexamples in the more restricted settings. However it is not so clear whether this restriction is needed in the least restrictive setting, i.e. preserving just the intersection graph. We will construct arrangements with these intersections which cannot be represented by an intersection graph of axis-aligned hypercubes up to a given dimension.

Let $G$ be a simple undirected graph. The \emph{boxicity} of $G$ is the smallest dimension $d$ such that $G$ can be represented as an intersection graph of axis-aligned boxes in $\mathbb{R}^d$. Similar notions are the \emph{cubicity} of $G$, where we consider a representation as an intersection graph of axis-aligned hypercubes, and the \emph{unit cubicity} of $G$, where all the hypercubes have to be unit. The notion of boxicity and unit cubicity (usually referred simply as cubicity) was introduced in 1969 by Roberts \cite{boxicity} and has since been actively studied, e.g. in~\cite{chandran2009upper}. The results we prove in this section were shown previously for unit cubicity in \cite{boxicity}. But our definition of cubicity is more general.

Furthermore, let $R(k, d)$ denote the smallest integer such that every coloring of the complete graph on $R(k,d)$ vertices with $d$ colors contains a monochromatic clique of size $k$. This is indeed one of the Ramsey numbers and it is well known that such a value exists.

As before, we want to construct such a graph that if there was an intersection-pattern equivalent arrangement of hypercubes it would force a cyclical inequality of hypercube sizes. However we do not have any tool yet for showing such inequalities in the most general setting.

\begin{lemma}
\label{lemma:Ramsey}
Let $G$ be a graph and $v$ be a vertex which has at least $R(k+2,d)$ neighbours that are pairwise non-adjacent. Suppose $G$ can be represented as an intersection graph of an arrangement $\R$ of axis-aligned hypercubes in $\mathbb{R}^d$ and $f: V(G) \to \R$ is the corresponding mapping. Then there is a neighbour $w$ of $v$ such that the hypercube $f(w)$ is more than $k$ times smaller than the hypercube $f(v)$.
\end{lemma}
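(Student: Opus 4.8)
The plan is to use a pigeonhole/Ramsey argument to find many neighbours of $v$ whose hypercubes are all positioned on the ``same side'' of $f(v)$ in some coordinate, and then to pack them along a single edge so that their sizes are forced to shrink geometrically. First I would fix the arrangement $\R$ and the mapping $f$, and focus on the set $U$ of at least $R(k+2,d)$ neighbours of $v$ that are pairwise non-adjacent. For each $u \in U$, the hypercube $f(u)$ intersects $f(v)$; since hypercubes are axis-aligned, this intersection must ``enter'' $f(v)$ along at least one coordinate direction. The idea is to assign to each $u$ a colour recording, for some fixed coordinate $c$, on which side of $f(v)$ the hypercube $f(u)$ protrudes (or more carefully, which face of $f(v)$ is pierced). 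With roughly $2d$ possible such descriptors, by the definition of $R(k+2,d)$ we can extract a monochromatic clique of size $k+2$: that is, $k+2$ of the $u$'s whose hypercubes all protrude through the \emph{same} face of $f(v)$.

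Next I would exploit the non-adjacency of these $k+2$ selected hypercubes. Because they are pairwise disjoint yet all intersect $f(v)$ through the same face, their projections $\mu_c(f(u))$ onto the perpendicular coordinate $c$ must be pairwise disjoint intervals, all contained within the projection $\mu_c(f(v))$ (an interval of length equal to the side length $s$ of $f(v)$). Here is where being a hypercube matters: the side length of $f(u)$ equals the length of $\mu_c(f(u))$. So I would line up these $k+2$ disjoint subintervals inside an interval of length $s$; the total of their lengths is at most $s$, so the smallest of them has length at most $s/(k+2)$. Thus the corresponding neighbour $w$ satisfies that the side of $f(w)$ is at most $s/(k+2)$, i.e. $f(w)$ is more than $k$ times smaller than $f(v)$ (a factor of $k+2$ comfortably beats $k$, and the slack is presumably what the ``$k+2$'' is chosen to provide).

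The main obstacle I anticipate is the geometric bookkeeping that justifies ``the projections onto the perpendicular coordinate are pairwise disjoint intervals inside $\mu_c(f(v))$.'' Two disjoint boxes can fail to be separated in any single fixed coordinate, so I cannot simply pick one coordinate $c$ in advance and hope the disjointness shows up there. The careful move is to let the Ramsey colour encode not just a face but enough information to guarantee separation: I would colour each $u$ by the pair (face of $f(v)$ pierced, coordinate in which $f(u)$ is separated from $f(v)$'s far side), so that within a monochromatic class all hypercubes share a common coordinate in which their overlap with $\mu_c(f(v))$ behaves like disjoint nested subintervals. Verifying that disjoint hypercubes intersecting $f(v)$ through a common face are genuinely disjoint in that distinguished coordinate — and that this coordinate can be made uniform across the class — is the delicate step, and it is what fixes the exact value $R(k+2,d)$ (i.e. the number $d$ of coordinates as colours and the clique size $k+2$). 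Once that separation is established, the length-counting conclusion is immediate.
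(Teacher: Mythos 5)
There is a genuine gap: you never set up the edge-coloring that the Ramsey number $R(k+2,d)$ is there to exploit. The separation of two disjoint axis-aligned hypercubes is a property of the \emph{pair} (for each disjoint pair there exists \emph{some} coordinate $c$ with $\mu_c$ of the two hypercubes disjoint, but different pairs may require different coordinates), so it cannot be encoded by a per-vertex label such as ``which face of $f(v)$ is pierced'' or ``(face, coordinate separating $f(u)$ from the far side of $f(v)$).'' Your own worry is exactly right, but your proposed fix does not resolve it: for $d\ge 3$, take $f(v)=[0,10]^3$ and two small cubes both poking through the face $x_1=0$, one near $(0,1,1)$ and one near $(0,5,5)$. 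They receive the same label under any of your vertex-coloring schemes, yet they are separated from each other only in coordinates $2$ or $3$, and their projections onto the coordinate perpendicular to the pierced face \emph{overlap} (both contain the value $x_1=0$) rather than being disjoint subintervals of $\mu_1(f(v))$ as you assert. The correct move, which is the paper's proof, is to color each \emph{edge} of the complete graph on the $R(k+2,d)$ pairwise non-adjacent neighbours by a coordinate $c$ in which the two corresponding (disjoint) hypercubes have disjoint projections; the definition of $R(k+2,d)$ as a Ramsey number for $d$-edge-colorings then yields $k+2$ neighbours that are \emph{pairwise} separated in one common coordinate $c$, and the interval-counting argument proceeds in that coordinate.

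A secondary inaccuracy: even after a common coordinate $c$ is secured, the $k+2$ pairwise disjoint intervals $\mu_c(f(u))$ need only \emph{intersect} $\mu_c(f(v))$, not be contained in it, so your bound of $s/(k+2)$ is not justified as stated. The correct accounting is that at most two of them (those covering the two endpoints of $\mu_c(f(v))$) can protrude, so at least $k$ of them are disjoint intervals contained in an interval of length $s$, whence the smallest has length less than $s/k$; this is precisely why the lemma is phrased as ``more than $k$ times smaller'' with clique size $k+2$. Since the arrangement consists of hypercubes, this projected length is the full side length, completing the argument.
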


\begin{proof}
Unsurprisingly, we will prove our claim using a coloring of the complete graph on $R(k+2, d)$ vertices. Each vertex gets labelled by one of the $R(k+2, d)$ neighbours of $v$. Observe that if two axis-aligned hypercubes $R_1$ and $R_2$ in $\mathbb{R}^d$ are disjoint then there is an integer $c$ such that $\mu_c(R_1)$ and $\mu_c(R_2)$ are disjoint. We will color an edge with any $c$ such that the corresponding hypercubes are disjoint under $\mu_c$. The number of vertices guarantees us a monochromatic clique of size $k+2$. That means there are $k+2$ neighbours of $f(v)$ that are pairwise disjoint under $\mu_c$ for some $c$. We have $k+2$ pairwise disjoint intervals and all of them need to intersect the interval $\mu_c(f(v))$. From this follows that the smallest interval $\mu_c(f(w))$ is more than $k$ times smaller than the interval $\mu_c(f(v))$. And since we are dealing with axis-aligned hypercubes, the hypercube $f(w)$ is more than $k$ times smaller than the hypercube $f(v)$. 
\end{proof}

\begin{theorem}
For every $d$ there is a graph $G$ with boxicity $2$ and cubicity larger than $d$.
\end{theorem}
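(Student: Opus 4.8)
The plan is to exhibit, for each fixed $d$, a single explicit graph, read off its boxicity directly, and derive its large cubicity from Lemma~\ref{lemma:Ramsey}. The natural candidate is the complete bipartite graph $G = K_{n,n}$ with parts $A$ and $B$, where $n = R(3,d)$. The reason for this choice is that it packs the hypothesis of the lemma into \emph{every} vertex simultaneously: each vertex of $A$ sees all of $B$, which is an independent set, and symmetrically each vertex of $B$ sees the independent set $A$. Thus every vertex has exactly $n = R(3,d)$ pairwise non-adjacent neighbours, which is precisely what the lemma needs with $k=1$.

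First I would pin down the boxicity. The bound $\mathrm{boxicity}(G) \le 2$ comes from an explicit planar representation: realize each $a_i \in A$ as a thin tall vertical box and each $b_j \in B$ as a thin wide horizontal box, arranged so that distinct boxes on the same side have disjoint $x$- (resp.\ $y$-) ranges while every vertical box crosses every horizontal one. This reproduces exactly the adjacencies of $K_{n,n}$. For the matching lower bound, note that $K_{n,n}$ contains an induced $C_4$ (on any two vertices of each side), and $C_4$ is not chordal, hence not an interval graph; since boxicity can only drop when passing to induced subgraphs, $\mathrm{boxicity}(G) \ge \mathrm{boxicity}(C_4) = 2$. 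Therefore $\mathrm{boxicity}(G) = 2$.

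The heart of the argument is to show $\mathrm{cubicity}(G) > d$, i.e.\ that $G$ has no representation by axis-aligned hypercubes in $\mathbb{R}^d$. Suppose for contradiction that $f \colon V(G) \to \R$ is such a representation. Applying Lemma~\ref{lemma:Ramsey} with $k = 1$, since every vertex $v$ has $n = R(3,d)$ pairwise non-adjacent neighbours, there is a neighbour $w$ of $v$ whose hypercube $f(w)$ is strictly smaller than $f(v)$. This assigns to each vertex a neighbour of strictly smaller side length, so starting from any vertex we obtain a walk $v_0, v_1, v_2, \dots$ along which the side lengths of $f(v_0), f(v_1), \dots$ strictly decrease. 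As $G$ is finite, some vertex must repeat, and comparing the side length of its hypercube with itself yields a strict inequality of a quantity with itself, a contradiction. Hence no such $f$ exists and $\mathrm{cubicity}(G) > d$.

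The main obstacle, and the step where Lemma~\ref{lemma:Ramsey} does the real work, is converting the purely combinatorial ``large independent neighbourhood'' into a genuine size inequality between hypercubes; everything else (the bipartite structure enforcing the hypothesis at every vertex, and the finiteness pigeonhole that closes the cyclic inequality) is bookkeeping. One should double-check the boundary case $d = 1$, where a one-dimensional hypercube is simply an interval, so cubicity $1$ coincides with being an interval graph; the induced $C_4$ again rules this out, making the uniform choice $n = R(3,d)$ safe for all $d \ge 1$.
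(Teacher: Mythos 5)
Your proposal is correct and follows essentially the same route as the paper: the same graph $K_{n,n}$ with $n = R(3,d)$, the same two-direction rectangle representation for boxicity $2$, and the same application of Lemma~\ref{lemma:Ramsey} with $k=1$; the only cosmetic difference is that you close the argument by following a strictly descending walk until a vertex repeats, whereas the paper applies the lemma once to the vertex whose hypercube is smallest.
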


\begin{proof}
Consider a complete bipartite graph $G$ with each partition of size $R(3,d)$. The boxicity of $G$ is 2 since one partition of $G$ can be represented as a set of vertical rectangles and the other as a set of horizontal rectangles (see Figure~\ref{fig:boxicity2}). Now suppose for a contradiction that the cubicity of $G$ is at most $d$ and fix any intersection representation with hypercubes in $\mathbb{R}^{d'}$, $d' \le d$. Let $v$ be the vertex of $G$ such that the corresponding hypercube is the smallest one. Since $v$ has exactly $R(3, d)$ pairwise disjoint neighbours, by Lemma \ref{lemma:Ramsey} there must be a neighbour of $v$ such that its corresponding hypercube is strictly smaller, that is a contradiction.
\end{proof}

\begin{figure}[h!]
\centering
\includegraphics[width=0.7\linewidth]{./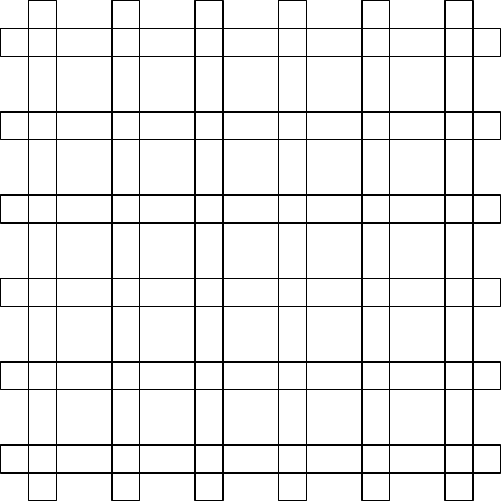}
\caption{An arrangement of rectangles whose intersection graph is a complete bipartite graph with each partition of size $R(3,2) = 6$.}
\label{fig:boxicity2}
\end{figure}


\section{Deciding squarability via LP}

\subsection{The problem}
In this section we present a linear program deciding whether a given arrangement of $n$ rectangles $\mathcal{R}=\{R_1,\ldots,R_n\}$ in $\mathbb{R}^2$ can be squared while preserving order of all sides.

Without loss of generality we can assume that all the endpoints of the intervals $[l(R_i),r(R_i)]$ and $[b(R_i),t(R_i)]$ have distinct values for all $i\in\{1,\ldots,n\}$. Otherwise we could change the endpoints a little without changing intersections between rectangles.

By ordering the endpoints of the intervals of projected rectangles into an increasing sequence, we obtain the sequence $a'_1< a'_2< \cdots< a'_{2n}$, where $a'_j=l(R_i)$ or $r(R_i)$ for some  $i\in\{1,\ldots,n\}$ (see Figure \ref{Fig1}). Replacing $l(R_i)$ and $r(R_i)$ by $i$ then yields the sequence $a_1,a_2,\ldots,a_{2n}$ of numbers $\{1,\ldots,k\}$, we call this sequence the $x$-sequence of $\mathcal{R}$. Clearly, each $i\in\{1,\ldots,n\}$ appears there exactly twice, thus for every $i\in\{1,\ldots,n\}$ we can define $a(i)=(j_1,j_2)$ such that $j_1<j_2$ and $a_{j_1}=a_{j_2}=i$. The $x$-sequence describes the respective ordering of the rectangles' $x$-coordinates. A $y$-sequence and the corresponding function $b$ are defined analogously.

\begin{figure}[h!]

  \centering
      \includegraphics[width=1\linewidth]{./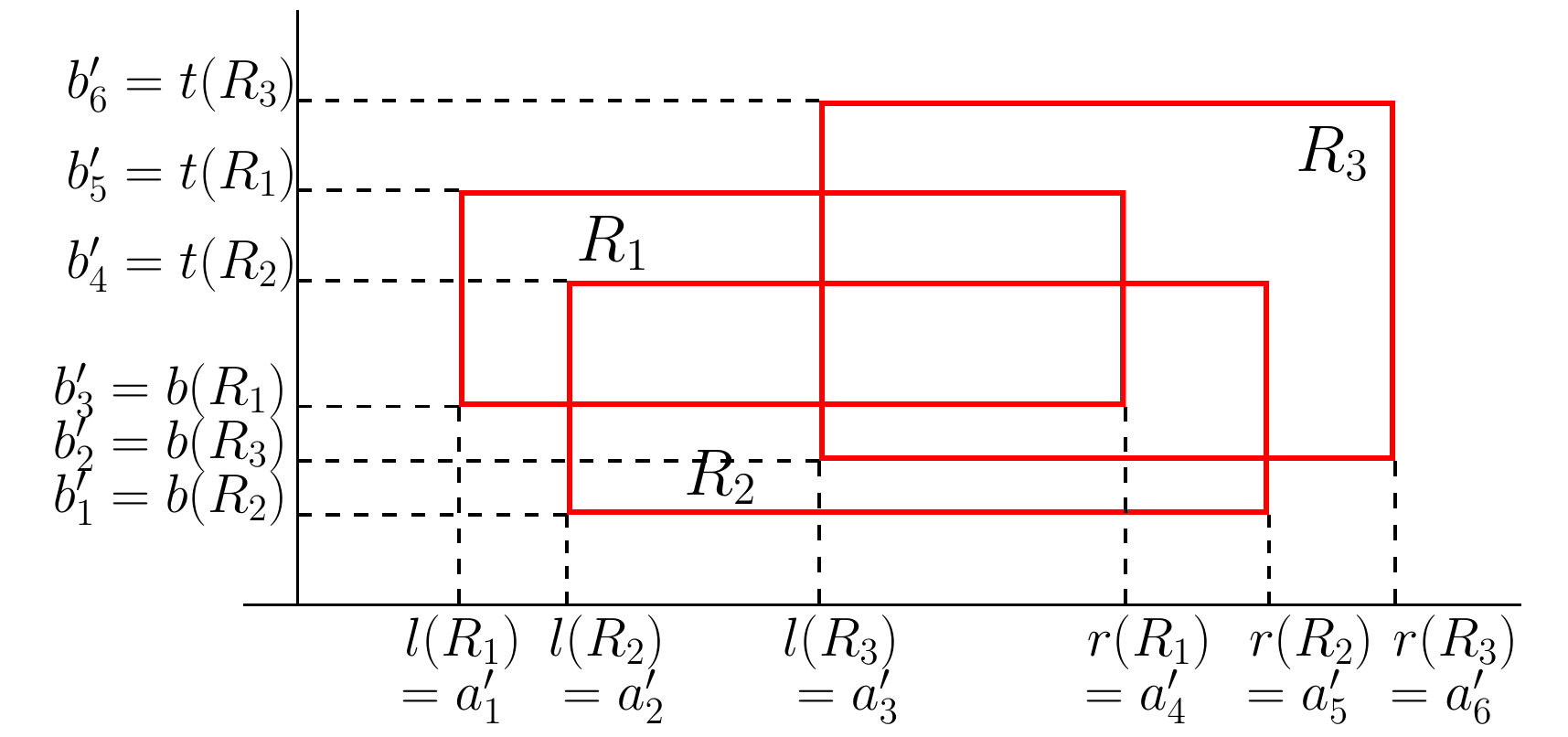}
  \caption{The $x$-sequence is $1,2,3,1,2,3$ and the $y$-sequence is $2,3,1,2,1,3$.}
 \label{Fig1}

\end{figure}

The decision problem can be reformulated in the following way: Given a family of rectangles $\mathcal{R}=\{R_1,\ldots,R_n\}$, does there exist a family of squares $\mathcal{S}=\{S_1,\ldots,S_n\}$ such that the $x$-sequence of $\mathcal{S}$ is identical to that of $\mathcal{R}$ and the $y$-sequence of $\mathcal{S}$ is identical to that of $\mathcal{R}$?

\subsection{Linear program}
Let us present a linear program solving the problem for an input set $\mathcal{R}=\{R_1,\ldots,R_n\}$. Let $a_1,a_1,\ldots,a_{2n}$ be the $x$-sequence and $b_1,b_2,\ldots,b_{2n}$ the $y$-sequence of $\mathcal{R}$. We have variables 
\[x_1,\ldots,x_{2n-1},y_1,\ldots,y_{2n-1}\geq 1,\]
where the value of $x_j$ represents the distance of the corresponding interval endpoints of rectangles $R_{a_{j}}$ and $R_{a_{j+1}}$ and the value of $y_j$ represents the distance of the corresponding endpoints of $R_{b_j}$ and $R_{b_{j+1}}$ (see Figure \ref{Fig2}). Let $(\mathbf{x},\mathbf{y})=(x_1,\ldots,x_{2n-1},y_1,\ldots,y_{2n-1})$ be any feasible solution to the following set of equalities. For every $i=1,\ldots,n$ we have an equality
\[\sum_{k=j_1}^{j_2-1}x_k=\sum_{k=j_1'}^{j_2'-1}y_k,\:\text{where }a(i)=(j_1,j_2),\,b(i)=(j_1',j_2')\]

\begin{figure}[h!]

  \centering
      \includegraphics[width=1\linewidth]{./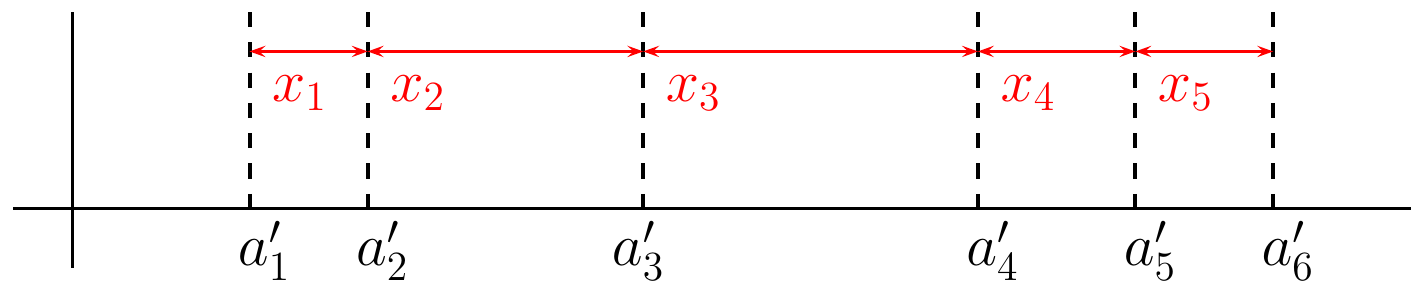}
  \caption{The meaning of the variables $x_1,\ldots,x_{2n-1}$.}
 \label{Fig2}

\end{figure}

From the solution $(\mathbf{x},\mathbf{y})$ we construct the corresponding set of squares $\mathcal{S}=\{S_1,\ldots,S_n\}$ as follows. Let $a(i)=(j_1,j_2)$ and $b(i)=(j_1',j_2')$, we set the coordinates of $S_i$ such that 
\[l(S_i)=\sum_{k=1}^{j_1-1}x_k,\quad r(S_i)=\sum_{k=1}^{j_2-1}x_k,\]
\[b(S_i)=\sum_{k=1}^{j_1'-1}y_k,\quad 
t(S_i)=\sum_{k=1}^{j_2'-1}y_k\]
As $x_i,y_i\geq 1$ for all $i\in\{1,\ldots,2n-1\}$, it is clear that the $x$-sequence and $y$-sequences of $\mathcal{R}$ are preserved in $\mathcal{S}$. The claim that $\mathcal{S}$ consists of squares follows immediately from the constraints of the linear program. Thus we obtain that if the linear program finds a feasible solution, we can construct an appropriate set of squares.

Reversely, let $\mathcal{S}$ be a set of squares that has the same $x$-sequence and $y$-sequence as $\mathcal{R}$. We can construct the variables $x_1,\ldots,x_{2n-1}$ and $y_1,\ldots,y_{2n-1}$ as the corresponding distances. It remains to sufficiently ``blow up'' this solution so that all of the variables are at least $1$. This is easily accomplished by multiplying the variables by the inverse of the minimum of them. We obtain a feasible solution to the linear program, as desired.

\section{Acknowledgments}
The authors would like to thank Pavel Valtr, Jan Kratochv\'{i}l and Stephen Kobourov for supervising the seminar where this paper was created. Our gratitude also goes to the anonymous referees for their helpful comments.

\small
\bibliographystyle{abbrv}
\bibliography{ref}

\end{document}